\newtheorem{definition}{Definition}
\newtheorem{lemma}{Lemma}
\newtheorem{theorem}{Theorem}
\newtheorem{corollary}{Corollary} 
\newcommand{\ZZ}{\mathbb{Z}}
\newcommand{\RR}{\mathbb{R}}
\newcommand{\CC}{\mathbb{C}}
\newcommand{\E}[1]{\mathbf{E}\left[#1\right]}
\newcommand{\Geo}[1]{\mathrm{Geo}(#1)}
\newcommand{\MGeo}[2]{\mathrm{MGeo}(#1,#2)}
\newcommand{\WGeo}[2]{\mathrm{WMGeo}(#1,#2)}
\newcommand{\Res}[3]{\mathrm{Res}(#1:#2=#3)}
\newcommand{\HarmonicN}[1]{\mathrm{H_{#1}}}
\newcommand{\BigO}[1]{\mathrm{O}\left(#1\right)}
\newcommand{\card}[1]{\mathrm{card}\left(#1\right)}
\newcommand{\MSP}[2]{\mathrm{MSP}\left(#1,#2\right)}
\title[On Leader Green Election]{On Leader Green Election}
\author[J. Cicho{\'n}, R. Kapelko and D. Markiewicz]{Jacek Cicho{\'n}\thanks{This paper was supported by Polish National Science Center (NCN) grant number 2013/09/B/ST6/02258}
\and Rafal Kapelko \and Dominik Markiewicz}
\address{Department of Computer Science\\
Faculty of Fundamental Problems of Technology\\
Wroc{\l}aw University of Technology\\
Poland
}
\keywords{leader election, distributed algorithms, geometric distribution, Rice method, urns and balls model}
\begin{document}
\label{firstpage}
\maketitle
\begin{abstract}
We investigate the number of survivors in the Leader Green Election (LGE) algorithm introduced by
P. Jacquet, D. Milioris and P. M{\"u}hlethaler in 2013.
Our method is based on the Rice method and gives quite precise formulas. 
We derive upper bounds on the number of survivors in this algorithm and we propose a proper use of LGE.
\par 
Finally, we discuss one property of a general urns and balls problem and show a lower bound for a required number of rounds for a large class of distributed leader election protocols.
\par
\end{abstract}

\section{Introduction}

In \cite{DBLP:conf/mascots/JacquetMM13} Philippe Jacquet, Dimitris Milioris and Paul M{\"u}hlethaler introduced 
a novel energy efficient broadcast leader election algorithm, which they called,
in accordance with the popular fashion in those years, 
a Leader Green Election (LGE). 
This algorithm was also presented by P. Jacquet at the conference AofA'13.
\par
We will use the same model as in \cite{DBLP:conf/mascots/JacquetMM13}, namely we assume that
the communication medium is of the broadcast type and is
prone to collisions. We also assume that
the time is slotted. Each slot can be empty (the slot does not contain any burst),
collision (the slot contains at least two burst) or successful (the slot contains a single burst).
\par
During the investigation of efficiency of LGE algorithm we found a connection of the leader election problem with some  properties of the general "`urns and balls"' model. This connection is discussed in Section \ref{sec:lowerbound}.

\subsection{Short Description of LGE}

We will give a short description of a slightly simplified version of the LGE algorithm (for example,
authors of \cite{DBLP:conf/mascots/JacquetMM13} consider an arbitrary base of numeral systems, but we restrict our considerations only to base $3$, since some additional arguments, not presented in this paper, show that base-3 is an optimal choice for our purposes).\par

We assume that the broadcast medium has $N$ connected users (assume $N \approx 10^6$) and that
the number of contenders $n$ is always smaller or equal to $N$. 
We fix a number $p\in(0,1)$ and we assume that p is not close to one (e.g.
$p = 0.01$). We also fix a number $L = \BigO{\log \log N}$.

Each  contender $\omega$ selects independently a random number $g_\omega$ according to the geometric distribution with parameter $p$ (see next section for details). 
If $g_\omega \geq 3^{L+1}$ then we put $g_\omega = 0$. 
The number $g_\omega$ is written
\begin{equation}
\label{eq:base3exp}
  g_\omega = \sum_{k=0}^{L} b_k \cdot 3^k~, 
\end{equation}
where $b_k \in \{0,1,2\}$. We fix a function $f:\{0,1,2\}\to\{0,1\}^2$ by $f(0)=00$,
$f(1) = 01$ and $f(2)=10$, 
and define the transmission key $K_\omega$ for a contender $\omega$ as the concatenation
$$
  K_\omega = f(b_L) || f(b_{L-1}) || \ldots || f(b_1) || f(b_0) ~.
$$
Notice that lenght($K_\omega$) = $\BigO{\log\log N}$.
This key $K_\omega$ is used in the following algorithm played in discrete rounds:
\begin{algorithmic}[1]
\STATE candidate = \TRUE
\FOR {i=1 \TO lenght($K_\omega$)}
\IF {$K_\omega(i)$ = 1}
\STATE send a beep
\ELSE
	\STATE listen
	\IF{you hear a beep} 
		\STATE candidate = \FALSE
		\STATE exit loop
	\ENDIF
\ENDIF
\ENDFOR
\end{algorithmic}

The survivors of this algorithm are those contenders which at the end have the variable "candidate"
set to true. In \cite{DBLP:conf/mascots/JacquetMM13} authors propose to repeat this algorithm 
several times in order to reduce the number of survivors to 1. However we propose in this paper 
an another approach: we propose to use this algorithm only once (in order to reduce number 
of survivors to a small number) and then to use other leader election algorithm for final selection a leader.
 
\subsection{Mathematical Background}

The core of LGE algorithm is based on properties of extremal statistics 
of random variables with geometric distributions.  
Let us recall that a random variable $X$ has a geometric distribution 
with parameter $p \in [0,1]$ ($X \sim \Geo{p}$)
if $P[X=k] = (1-p)^{k-1}p$ for $k\geq 1$. 
In the first part of LGE, each user chooses independently a random variable with geometric
distribution with a fixed parameter $p$. The winners of this part of LGE are those users who select
a maximal number.  

\begin{definition}
A random variable $M$ has distribution $\MGeo{n}{p}$ if there are independent random variables 
$X_1,\ldots,X_n$ with distribution $\Geo{p}$ such that $$M = \max\{X_1,\ldots,X_n\}~.$$
\end{definition}

It is well known  (see e.g. \cite{Szpankowski:1990:YAB:78907.78914},
\cite{DBLP:journals/fuin/CichonK13}) that if $M \sim \MGeo{n}{p}$ then
$\E{M}$ = $\frac12 + \frac{\HarmonicN{n}}{\ln \frac{1}{1-p}} + P(n) + \BigO{\frac1n}$, 
where $P(n)$ is a periodic function with small amplitude and $\HarmonicN{n}$ is the $n^{\mathrm{th}}$ harmonic number.
Let us recall that $\HarmonicN{n} = \ln n + \gamma + \BigO{\frac1n}$, where $\gamma = 0.557\ldots$ is the Euler constant.

The distribution $\MGeo{n}{p}$ controls the number of time slots used in LGE algorithm.
More precisely, the LGE algorithm requires some upper approximation on the variable 
with the $\MGeo{n}{p}$ distribution. The next Lemma gives some upper bound for it.

\begin{lemma}
\label{lemma:boundNumberRounds}
Let $M \sim \MGeo{n}{p}$, $C>0$ and $Q = \frac{1}{1-p}$. Then
$$\Pr[M>C \frac{\ln n}{\ln Q}] \leq \frac{1}{n^{C-1}}~.$$
\end{lemma}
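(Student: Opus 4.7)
The plan is to combine a union bound over the $n$ variables with the closed-form tail of the geometric distribution. Since $M = \max\{X_1,\ldots,X_n\}$ for independent $X_i \sim \Geo{p}$, the event $\{M > t\}$ is the union of the events $\{X_i > t\}$, so
$$
\Pr[M > t] \;=\; \Pr\!\left[\bigcup_{i=1}^{n} \{X_i > t\}\right] \;\leq\; \sum_{i=1}^{n} \Pr[X_i > t] \;=\; n\,\Pr[X_1 > t].
$$
Independence of the $X_i$ is not even needed for this step; it reduces the question to a single geometric tail.

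Next I would write down the tail explicitly. Summing the geometric series in $\Pr[X = k] = (1-p)^{k-1} p$ over $k \geq k_0$ gives $\Pr[X \geq k_0] = (1-p)^{k_0 - 1}$ for each positive integer $k_0$, and hence $\Pr[X_1 > t] = (1-p)^{\lfloor t \rfloor}$ for every real $t \geq 0$. The proof then hinges on the algebraic identity
$$
(1-p)^{t} \;=\; Q^{-t} \;=\; e^{-t \ln Q} \;=\; e^{-C \ln n} \;=\; n^{-C},
$$
valid for the specific choice $t = C \ln n / \ln Q$. Plugging this into the union bound yields $\Pr[M > t] \leq n \cdot n^{-C} = 1/n^{C-1}$, which is exactly the stated estimate.

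There is no conceptual obstacle in the argument; the only point requiring a bit of care is the rounding in $\lfloor t \rfloor$, which arises because $X_i$ is integer-valued while the threshold $t = C \ln n / \ln Q$ is real. This discrepancy is harmless and at worst costs a factor of $Q = 1/(1-p)$ that can either be absorbed into the constant or eliminated by applying the bound at the smallest integer above $C \ln n / \ln Q$. The essence of the proof is therefore a one-line union bound coupled with the observation that $Q^{C \ln n / \ln Q} = n^C$ cancels the extraneous factor $n$ down to $n^{1-C}$.
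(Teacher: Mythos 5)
Your proof is correct and follows essentially the same route as the paper's: a union bound reducing $\Pr[M>t]$ to $n\,\Pr[X_1>t]$, the geometric tail $\Pr[X_1>k]=(1-p)^k$, and the identity $Q^{C\ln n/\ln Q}=n^{C}$. You are in fact slightly more careful than the paper, which plugs the (generally non-integer) threshold $t=C\ln n/\ln Q$ directly into the integer-valued tail formula without comment; your observation that the rounding costs at most a factor of $Q$ is the honest accounting of that step.
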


\begin{proof}
Let $q=1-p$. Let us recall that if $X \sim \Geo{p}$ and $k$ is an integer then  $\Pr[X>k] = q^k$.
Therefore $\Pr[M>k] \leq n q^k$, hence
$\Pr\left[M>C \frac{\ln n}{\ln Q}\right] \leq n q^{C \frac{\ln n}{\ln Q}} = \frac{1}{n^{C-1}}$.
\end{proof}

We introduce the next distribution which models the number of survivors in LGE algorithm.

\begin{definition}
A random variable $W$ has distribution $\WGeo{n}{p}$ if there are independent random variables 
$X_1,\ldots,X_n$ with distribution $\Geo{p}$ such that 
$$
  W = \card{\{k: X_k = \max\{X_1,\ldots,X_n\}\}}~.
$$
\end{definition}

\section{Probabilistic Propeties of LGE}

The formal analysis of LGE algorithm in \cite{DBLP:conf/mascots/JacquetMM13} is based 
on the Mellin transform. In this section, we use an approach based on Rice's method (see e.g. \cite{Knuth:1998:ACP:280635} and \cite{journals/tcs/FlajoletS95}). We shall derive formulas for expected number of survivors and probabilities  
for the number of survivors. By $W_{n,p}$ we denote a random variable with $\WGeo{n}{p}$ distribution.
  
\begin{theorem}
\label{theorem:winners}
Let $n \geq 2$, $p\in (0,1)$ and $q=1-p$. Let $W_{n,p} \sim \WGeo{n}{p}$ and $a\geq 1$. Then
$$
  \Pr[W_{n,p}=a] = \binom{n}{a} p^a  \sum_{b=0}^{n-a} \binom{n-a}{b} \frac{(-1)^b}{1-q^{a+b}} 
$$
and
$$
  \E{W_{n,p}}   = \frac{n p}{q} \sum_{b=0}^{n-1} \binom{n-1}{b} \frac{(-1)^b}{1-q^{b+1}}~.
$$
\end{theorem}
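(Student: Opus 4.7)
My plan is to prove both identities by conditioning on the value of the maximum and applying the binomial theorem together with a geometric series.

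For the probability formula, I would decompose the event $\{W_{n,p}=a\}$ according to (i) which $a$-subset of variables achieves the maximum and (ii) the common maximum value $k$. Independence of the $X_i$ together with $\Pr[X_i=k]=q^{k-1}p$ and $\Pr[X_i<k]=1-q^{k-1}$ gives
\[
\Pr[W_{n,p}=a]=\binom{n}{a}\sum_{k=1}^{\infty}(q^{k-1}p)^{a}\,(1-q^{k-1})^{n-a}.
\]
I would then expand $(1-q^{k-1})^{n-a}$ by the binomial theorem, swap the two sums (legitimate by absolute convergence since $|q|<1$), and collapse the inner geometric series $\sum_{j\ge 0}q^{j(a+b)}=1/(1-q^{a+b})$ (the exponent $a+b\ge 1$ since $a\ge 1$). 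This yields exactly the stated closed form.

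For the expectation, I would use the shortcut $\E{W_{n,p}}=n\,\Pr[X_{1}=M]$, where $M=\max_{i}X_{i}$, which follows from linearity together with the symmetry of the $X_i$. Conditioning on $X_1=k$ gives
\[
\Pr[X_{1}=M]=\sum_{k=1}^{\infty}p\,q^{k-1}(1-q^{k})^{n-1},
\]
and the same expansion-and-swap routine yields
\[
\Pr[X_{1}=M]=p\sum_{b=0}^{n-1}\binom{n-1}{b}(-1)^{b}\,\frac{q^{b}}{1-q^{b+1}}.
\]
To bring this into the form announced in the theorem, I would invoke the algebraic identity $\dfrac{q^{b}}{1-q^{b+1}}=\dfrac{1}{q}\!\left(\dfrac{1}{1-q^{b+1}}-1\right)$. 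The constant contribution $-1/q$ is killed by the alternating binomial identity $\sum_{b=0}^{n-1}(-1)^{b}\binom{n-1}{b}=(1-1)^{n-1}=0$ (which uses $n\ge 2$), leaving precisely $\dfrac{p}{q}\sum_{b}\binom{n-1}{b}\dfrac{(-1)^{b}}{1-q^{b+1}}$. Multiplying by $n$ finishes the second identity.

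The only step that is not entirely mechanical is spotting the rewriting $q^{b}/(1-q^{b+1})=[1/(1-q^{b+1})-1]/q$, which together with $(1-1)^{n-1}=0$ repackages the natural expression into the compact $np/q$ form displayed in the theorem; this is the point at which $n\ge 2$ is used. Everything else (the decomposition by the maximum, binomial expansion, interchange of summation order under absolute convergence, and geometric-series evaluation) is routine.
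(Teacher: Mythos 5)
Your proposal is correct and follows essentially the same route as the paper: decompose the event by the value of the maximum, expand $(1-q^{k-1})^{n-a}$ binomially, interchange the sums, evaluate the geometric series, and finish with the rewriting $q^{b+1}/(1-q^{b+1}) = 1/(1-q^{b+1}) - 1$ combined with $(1-1)^{n-1}=0$ (where $n\geq 2$ enters), exactly as in the paper. The only cosmetic difference is that you reach the intermediate expression $\sum_{k\geq 1} n p\, q^{k-1}(1-q^{k})^{n-1}$ via the symmetry identity $\E{W_{n,p}} = n\Pr[X_1 = \max_i X_i]$, whereas the paper derives the same expression by summing $a\Pr[A_{n;k,a}]$ over $a$ using the absorption identity $a\binom{n}{a}=n\binom{n-1}{a-1}$ and the binomial theorem.
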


\begin{proof} 
Let us fix $n\geq 2$, $p \in (0,1)$ and $q=1-p$.
Let $X_1,\ldots,X_n$ be independent random variables with distribution $\Geo{p}$ and
let
$$
  A_{n;k,a} =(\max\{X_1,\ldots,X_n\}=k) \land (\card{\{ i : X_i = k\} = a})~.
$$
Then
$[W_{n,p}=a] =  \bigcup_{k\geq 1} A_{n;k,a}$
and 
$\Pr[A_{n;k,a}] = \binom{n}{a} (q^{k-1}p)^a (1-q^{k-1})^{n-a}$.
Therefore,
\begin{gather*}
  \Pr[W_n=a] = \sum_{k\geq 1} \binom{n}{a} (q^{k-1}p)^a(1-q^{k-1})^{n-a} = \\
	\binom{n}{a} p^a  \sum_{k\geq 0} q^{ka}(1-q^k)^{n-a} = 
	\binom{n}{a} p^a  \sum_{k\geq 0}\sum_{b=0}^{n-a} \binom{n-a}{b}(-1)^b q^{kb}q^{ka} = \\ 
	\binom{n}{a} p^a  \sum_{b=0}^{n-a} \binom{n-a}{b}(-1)^b \sum_{k\geq 0}q^{k(b+a)} = 
	\binom{n}{a} p^a  \sum_{b=0}^{n-a} \binom{n-a}{b} \frac{(-1)^b}{1-q^{a+b}} ~,
\end{gather*}
so the first part of the Theorem is proved.
Next we have
\begin{gather*}
\sum_{a=1}^{n} a \Pr[A_{n;k,a}] = 
n \sum_{a=1}^{n} \binom{n-1}{a-1} (q^{k-1}p)^a (1-q^{k-1})^{n-a} = \\
n q^{k-1}p \sum_{a=1}^{n} \binom{n-1}{a-1} (q^{k-1}p)^{a-1} (1-q^{k-1})^{(n-1)-(a-1)} = \\
n q^{k-1}p \sum_{b=0}^{n-1} \binom{n-1}{b} (q^{k-1}p)^{b} (1-q^{k-1})^{(n-1)-b} = \\
n q^{k-1}p (q^{k-1}p + 1- q^{k-1})^{n-1} =
n q^{k-1}p (1-q^k)^{n-1}~.
\end{gather*}
Therefore, for fixed $n$, we have
\begin{gather*}
\sum_{k\geq 1}\sum_{a=1}^{n} a \Pr[A_{k,a}] = 
\sum_{k\geq 1} n p q^{k-1} \sum_{b=0}^{n-1} \binom{n-1}{b}(-1)^b q^{k b} = \\
n p \sum_{b=0}^{n-1} \binom{n-1}{b}(-1)^b \sum_{k\geq 1}  q^{k-1} q^{k b} = 
\frac{n p}{q} \sum_{b=0}^{n-1} \binom{n-1}{b}(-1)^b \sum_{k\geq 1}  q^k q^{k b} = \\
\frac{n p}{q} \sum_{b=0}^{n-1} \binom{n-1}{b}(-1)^b \sum_{k\geq 1}  q^{k (b+1)} = 
\frac{n p}{q} \sum_{b=0}^{n-1} \binom{n-1}{b}(-1)^b q^{b+1}\sum_{k\geq 0}  (q^{b+1})^k = \\
\frac{n p}{q} \sum_{b=0}^{n-1} \binom{n-1}{b} \frac{(-1)^b q^{b+1}}{1- q^{b+1}}~.
\end{gather*}

Since we assumed that $n\geq 2$, we have
\begin{gather*}
\frac{n p}{q} \sum_{b=0}^{n-1} \binom{n-1}{b} \frac{(-1)^b q^{b+1}}{1- q^{b+1}} = 
\frac{n p}{q} \sum_{b=0}^{n-1} \binom{n-1}{b} (-1)^b \frac{ (q^{b+1}-1) +1}{1- q^{b+1}} = \\ 
-\frac{n p}{q} \sum_{b=0}^{n-1} \binom{n-1}{b} (-1)^b + 
\frac{n p}{q} \sum_{b=0}^{n-1} \binom{n-1}{b} (-1)^b \frac{1}{1- q^{b+1}} = \\
-\frac{n p}{q} (-1+1)^{n-1} + 
\frac{n p}{q} \sum_{b=0}^{n-1} \binom{n-1}{b}  \frac{(-1)^b}{1- q^{b+1}} =
\frac{n p}{q} \sum_{b=0}^{n-1} \binom{n-1}{b}  \frac{(-1)^b}{1- q^{b+1}}~. 
\end{gather*}

\end{proof}

From Theorem \ref{theorem:winners} we obtain the following equality $\Pr[W_{n,p}=1]$ = $n p  \sum_{b=0}^{n-1} \binom{n-1}{b} \frac{(-1)^b}{1-q^{1+b}}$.
Therefore, we have the following nice equality
$$
\E{W_{n,p}} = \frac{1}{1-p}\Pr[W_{n,p}=1] ~.
$$ 
\textbf{Remark} Quite recently we learned that Theorem \ref{theorem:winners} and part of results from the next subsection has been proved in \cite{kirschenhofer1996}. 
Due to the completeness of arguments we decided to leave the proof in this paper. 
Our new contribution in this section is the  Theorem \ref{theorem:winnersBound}.

\subsection{Approximations}

Let us fix the number $p\in(0,1)$ and let $q = 1-p$.
Let $f_a(z) = \frac{1}{1-q^{a+z}}$. We shall consider complex variable functions $f_a$ for such indexes $a$ 
which are integers such that $a\geq 1$.
Notice that the function $f_a$ has singularities at points from the set $\{\zeta_{a,k} : k\in\ZZ\}$,
where $\zeta_{a,k} = -a + \frac{2 k \pi \mathbf{i}}{\ln(q)}$. 
The function $f_a$ is periodic with period $2\pi\mathbf{i}/\ln(q)$,
has single poles at points $\zeta_{a,k}$ and
$$
  \Res{f_a(z)}{z}{\zeta_{a,k}} = \frac{-1}{\ln q} ~.
$$
It is easy to check  that $\lim_{x\to\infty} |f_a(x + \mathbf{i}y)| = 1$ and
$\lim_{x\to-\infty} |f_a(x + \mathbf{i}y)| = 0$ for each fixed $y \in \RR$.

Let $K_n(s) = \frac{n!}{s(s-1)\cdots(s-n)}$. Notice that if $n \geq 1$ then $|K_n(s)| = \BigO{\frac{1}{|s|^2}}$ 
as $|s|$ grows to infinity.  Also notice that if $a>0$ is an integer, then 
$K_n(-a) =  (-1)^{n+1} \frac{1}{a} \binom{n}{a}^{-1}$.
Notice also that the sets of singularity points of functions $f_a$ and $K_n$ are disjoint.
This fact greatly simplifies the analysis of the singular points of the product of these functions

\begin{lemma} 
\label{lemma:afterRice}
If $m\geq 1$, $a\geq 1$ and $q \in (0,1)$ then
$$
  \sum_{b=0}^{m} \binom{m}{b} \frac{(-1)^b}{1- q^{a+b}} = (-1)^{m} \frac{1}{\ln q} \sum_{k\in\ZZ} K_m(\zeta_{a,k})~.
$$
\end{lemma}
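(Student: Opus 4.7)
The plan is to apply the Rice contour-integral method. Consider the meromorphic function $g(s) = K_m(s)f_a(s)$, where $f_a(s) = 1/(1-q^{a+s})$. At each $b \in \{0,1,\ldots,m\}$, $K_m$ has a simple pole with residue $(-1)^{m-b}\binom{m}{b}$, while $f_a$ is holomorphic there because its poles $\zeta_{a,k}$ all lie on the vertical line $\mathrm{Re}(s)=-a\leq-1$, disjoint from $\{0,1,\ldots,m\}$. Hence by the residue theorem, a small positively oriented contour $\mathcal{C}_0$ enclosing exactly $\{0,1,\ldots,m\}$ gives
\begin{equation*}
\frac{1}{2\pi\mathbf{i}}\oint_{\mathcal{C}_0} g(s)\,ds = \sum_{b=0}^{m}(-1)^{m-b}\binom{m}{b}f_a(b) = (-1)^m \sum_{b=0}^{m}\binom{m}{b}\frac{(-1)^b}{1-q^{a+b}}.
\end{equation*}

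The second step is to deform $\mathcal{C}_0$ into a sequence of rectangular contours $\mathcal{R}_N$ centered at the origin, with horizontal sides at imaginary parts $\pm(2N+1)\pi/\ln q$ (i.e.\ halfway between two consecutive poles of $f_a$, whose imaginary spacing is $2\pi/|\ln q|$), and vertical sides at real parts $\pm N$. For $N$ large, $\mathcal{R}_N$ encloses the integer poles $\{0,1,\ldots,m\}$ together with the poles $\zeta_{a,k}$ of $f_a$ for $|k|\leq N$. Since each $\zeta_{a,k}$ is simple for $f_a$ with residue $-1/\ln q$, one has
\begin{equation*}
\Res{g(s)}{s}{\zeta_{a,k}} = -\frac{K_m(\zeta_{a,k})}{\ln q},
\end{equation*}
so the residue theorem on $\mathcal{R}_N$ yields
\begin{equation*}
\frac{1}{2\pi\mathbf{i}}\oint_{\mathcal{R}_N} g(s)\,ds = (-1)^m \sum_{b=0}^{m}\binom{m}{b}\frac{(-1)^b}{1-q^{a+b}} - \frac{1}{\ln q}\sum_{|k|\leq N} K_m(\zeta_{a,k}).
\end{equation*}

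Third, I show $\oint_{\mathcal{R}_N} g(s)\,ds \to 0$ as $N\to\infty$. Using $|K_m(s)| = \BigO{|s|^{-2}}$ together with (i) $|f_a(s)|\to 1$ as $\mathrm{Re}(s)\to+\infty$, (ii) $|f_a(s)|\to 0$ as $\mathrm{Re}(s)\to-\infty$, and (iii) a uniform bound on $|f_a|$ on the chosen horizontal sides (because they remain at distance $\pi/|\ln q|$ from the periodic pole lattice of $f_a$), each of the four sides of $\mathcal{R}_N$ contributes $\SmallO{1}$: on the vertical sides the length $\BigO{N}$ is dominated by the decay $\BigO{N^{-2}}$, and on the horizontal sides the $\BigO{|s|^{-2}}$ weight integrated over length $2N$ is also $\SmallO{1}$.

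Passing to the limit in the identity above and multiplying through by $(-1)^m$ gives the claim. The main obstacle is the vanishing of the contour integral in the third step: the contour must be chosen so that $|f_a|$ stays uniformly bounded as $\mathcal{R}_N$ grows, which forces the horizontal sides to stay bounded away from the periodic pole lattice of $f_a$. The ``midway'' choice above is precisely what makes the $\BigO{|s|^{-2}}$ decay of $K_m$ sufficient to kill the contributions from all four sides simultaneously.
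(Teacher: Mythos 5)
Your proof is correct and follows essentially the same route as the paper: Rice's integral representation of the alternating sum via the residues of $K_m f_a$ at $\{0,\ldots,m\}$, followed by expanding the contour to capture the poles $\zeta_{a,k}$ and showing the integral over the growing contour vanishes because $|K_m(s)|=\BigO{|s|^{-2}}$ while $|f_a|$ stays bounded on sides threaded midway between the periodic poles. If anything, you supply more detail than the paper (the explicit residues $(-1)^{m-b}\binom{m}{b}$ of $K_m$ and the side-by-side decay estimate), which the paper leaves implicit.
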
 

\begin{proof}
Rice's integrals summation method (see \cite{Knuth:1998:ACP:280635}) is based on the formula
$$
  \sum_{b=0}^{m} \binom{m}{b}(-1)^b g(b) = 
	\frac{(-1)^m}{2 \pi i} \oint_{\mathcal{C}} g(s) K_m(s) ds ~,
$$  
where $g$ is analytic in a domain containing $[0,+\infty)$ and
$\mathcal{C}$ is a positively oriented closed curve that lies in the domain of analyticity of $g$ and
encircles the real interval $[0, m]$. 

We use Rice' formula for functions $f_a$. 
Notice that
$$
\frac{1}{2 \pi i} \oint_{\mathcal{C}} f_a(s) K_m(s) ds =  \sum_{k=0}^{m} \Res{f_a(z)K_m(z)}{z}{k}~.
$$

Let $C_k$ be the positively oriented square with corners at points
$\pm \eta_{q,k} \pm \eta_{q,k} \mathbf{i}$, where $\eta_{q,k} = (2k+1)\pi/\ln q$. We consider such $k$ that
$|\eta_{q,k}|>m$. For such $k$ the interval $[0,m]$ lies inside the square $C_k$. 
The mentioned before Lemma \ref{lemma:afterRice} properties of the function $f_a$ 
(periodicity and boundedness on horizontal lines not crossing singular points) and the kernel function $K_m$ imply that
$$
  \lim_{k\to\infty} \oint_{C_k} f_a(s) K_m(s) ds = 0~, 
$$ 
from which we deduce that
\begin{gather*}
  \sum_{k\in\ZZ} \Res{f_a(z) K_m(z)}{z}{\zeta_{a,k}} + 
	\sum_{k=0}^{m} \Res{f_a(z)K_m(z)}{z}{k} = 0~.
\end{gather*}
Therefore,
\begin{gather*}
\sum_{b=0}^{m} \binom{m}{b} \frac{(-1)^b}{1- q^{a+b}} = 
  (-1)^{m+1} \sum_{k\in\ZZ} \Res{f_a(z) K_m(z)}{z}{\zeta_{a,k}} = \\ 
	(-1)^{m+1} \sum_{k\in\ZZ} \Res{f_a(z)}{z}{\zeta_{a,k}} K_m(\zeta_{a,k}) = 
	(-1)^{m+1} \sum_{k\in\ZZ} \frac{-1}{\ln q} K_m(\zeta_{a,k}).
\end{gather*}
\end{proof}

\begin{lemma}
\label{lemma:kernelfunction}
Suppose that $a>0$ is an integer and that $b\in\CC$. Then
$$
  K_m(-a + b) =  
	\frac{(-1)^{m+1}}{a \binom{a+m}{a}} \cdot \frac{1}{\prod_{j=a}^{m+a} (1 - \frac{b}{j})} ~.
$$
\end{lemma}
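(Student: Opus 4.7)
The plan is a direct computation, unwinding the definition of $K_m$ and factoring scalars out of each term in the denominator.

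First I would write $K_m(-a+b)$ explicitly from the definition $K_m(s)=\frac{m!}{s(s-1)\cdots(s-m)}$. Substituting $s=-a+b$ gives a denominator
$$\prod_{k=0}^{m}(-a+b-k)=\prod_{k=0}^{m}\bigl(-(a+k)+b\bigr).$$
From each of the $m+1$ factors I would pull out $-(a+k)$, yielding a sign $(-1)^{m+1}$, a product $\prod_{k=0}^{m}(a+k)$, and the residual product $\prod_{k=0}^{m}\bigl(1-b/(a+k)\bigr)$. Reindexing $j=a+k$, the last factor becomes $\prod_{j=a}^{m+a}(1-b/j)$, which is exactly the denominator appearing in the claimed formula.

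Next I would identify $\prod_{j=a}^{m+a}j$ with the binomial coefficient in the statement. A short manipulation shows
$$\prod_{j=a}^{m+a}j=\frac{(m+a)!}{(a-1)!}=a\cdot\frac{(m+a)!}{a!\,m!}\cdot m!=a\binom{m+a}{a}\cdot m!,$$
so the $m!$ in the numerator cancels, and moving $(-1)^{m+1}$ to the numerator (using $1/(-1)^{m+1}=(-1)^{m+1}$) yields the claimed identity.

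There is no real obstacle here; the lemma is essentially a bookkeeping exercise. The only small point to be careful about is the sign, since one has to check that $(m+1)$ factors of $-1$ are being pulled out (corresponding to $k=0,1,\ldots,m$) rather than $m$, and that the final rearrangement $(-1)^{m+1}$ in the numerator is consistent with the statement.
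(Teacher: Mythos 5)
Your proposal is correct and follows essentially the same route as the paper's proof: expand $K_m(-a+b)$ from the definition, pull $-(a+k)$ out of each of the $m+1$ factors to get the sign $(-1)^{m+1}$ and the product $\prod_{j=a}^{m+a}(1-\frac{b}{j})$, and identify $\prod_{j=a}^{m+a} j = \frac{(m+a)!}{(a-1)!} = a\binom{a+m}{a}\,m!$. The sign bookkeeping you flag is exactly the only delicate point, and you handle it correctly.
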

\begin{proof}
Directly from the definition of the kernel function $K_m$ we have
\begin{gather*}
K_m(-a + b) = m! \prod_{j=0}^{m} \frac{1}{-a+b-j} = 
(-1)^{m+1} m! \prod_{j=0}^{m} \frac{1}{a+j-b} = \\
(-1)^{m+1} m! \prod_{j=a}^{m+a} \frac{1}{j-b} = 
(-1)^{m+1} m! \prod_{j=a}^{m+a} \frac{1}{j(1-\frac{b}{j})}=\\
(-1)^{m+1} m! \frac{(a-1)!}{(m+a)!} \prod_{j=a}^{m+a} \frac{1}{(1-\frac{b}{j})} ~.
\end{gather*}
\end{proof}

\noindent
The next Lemma follows directly from Theorem \ref{theorem:winners}, Lemmas \ref{lemma:afterRice} and  \ref{lemma:kernelfunction}:  
\begin{lemma}
\label{lemma:probability}
If $n>a$ then
$$
  \Pr[W_{n,p}=a] = \frac{p^a}{a \ln \frac1q} \left(1+ \sum_{k\in\ZZ\setminus\{0\}} \frac{1}{\prod_{j=a}^{n} 
	(1 - \frac{2 k \pi \mathbf{i}/\ln q}{j})} \right) ~,
$$
where $q = 1-p$.
\end{lemma}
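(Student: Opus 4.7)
The plan is to observe that the stated formula is essentially obtained by feeding the probability formula from Theorem \ref{theorem:winners} into Lemma \ref{lemma:afterRice} and then simplifying the resulting kernel values via Lemma \ref{lemma:kernelfunction}. So the proof will be a short chain of substitutions with some bookkeeping of signs.

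First I would start from the exact formula
$$\Pr[W_{n,p}=a] = \binom{n}{a} p^a  \sum_{b=0}^{n-a} \binom{n-a}{b} \frac{(-1)^b}{1-q^{a+b}}$$
given by Theorem \ref{theorem:winners}, and apply Lemma \ref{lemma:afterRice} with $m = n-a$ (which is $\geq 1$ since we assume $n>a$) to rewrite the inner sum as
$$(-1)^{n-a}\frac{1}{\ln q} \sum_{k\in\ZZ} K_{n-a}(\zeta_{a,k}),$$
where $\zeta_{a,k} = -a + \frac{2k\pi\mathbf{i}}{\ln q}$.

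Next I would evaluate $K_{n-a}(\zeta_{a,k})$ by applying Lemma \ref{lemma:kernelfunction} with $m = n-a$ and $b = \frac{2k\pi\mathbf{i}}{\ln q}$. This gives
$$K_{n-a}(\zeta_{a,k}) = \frac{(-1)^{n-a+1}}{a\binom{n}{a}} \cdot \frac{1}{\prod_{j=a}^{n}\bigl(1 - \frac{2k\pi\mathbf{i}/\ln q}{j}\bigr)}.$$
Plugging this back, the factor $\binom{n}{a}$ from Theorem \ref{theorem:winners} cancels against the $\binom{n}{a}^{-1}$ coming from the kernel, the two signs $(-1)^{n-a}$ and $(-1)^{n-a+1}$ combine to $-1$, which absorbs into $-1/\ln q = 1/\ln(1/q)$, and the factor $1/a$ appears as claimed.

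The only remaining step is to isolate the $k=0$ term: at $k=0$ every factor in $\prod_{j=a}^{n}(1 - \frac{2k\pi\mathbf{i}/\ln q}{j})$ equals $1$, so that term contributes the constant $1$ inside the bracket, and the other terms form the stated sum over $k\in\ZZ\setminus\{0\}$. The whole argument is just substitution and sign tracking, so no real obstacle arises beyond making sure the signs match; the only prerequisite that must not be overlooked is the hypothesis $n>a$, which is what makes $m = n-a \geq 1$ and hence allows Lemma \ref{lemma:afterRice} to be applied (and also makes the product over $j=a,\ldots,n$ nontrivial).
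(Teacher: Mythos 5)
Your proposal is correct and matches the paper's own argument: the paper states that this lemma ``follows directly from Theorem \ref{theorem:winners}, Lemmas \ref{lemma:afterRice} and \ref{lemma:kernelfunction}'', and your chain of substitutions (with $m=n-a$, $b=2k\pi\mathbf{i}/\ln q$, the cancellation of $\binom{n}{a}$, the sign combination $(-1)^{n-a}\cdot(-1)^{n-a+1}=-1$ absorbed into $1/\ln\frac1q$, and the extraction of the $k=0$ term) is exactly the intended derivation, including the correct use of $n>a$ to ensure $m\geq 1$.
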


\begin{theorem} 
\label{thm:finalappro}
If $0< a < n$ then 
$\Pr[W_{n,p}=a] = \frac{p^a}{a\ln(Q)}+r_n$, where $|r_n|<\frac{(a+1)^2}{12a}p^a \ln(Q)$,
where $Q = \frac{1}{1-p}$. 
\end{theorem}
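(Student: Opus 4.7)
\bigskip
\noindent\textbf{Proof proposal.}
The plan is to start from Lemma \ref{lemma:probability}, identify the main term $\frac{p^a}{a\ln Q}$ as the ``$k=0$'' contribution, and control the remaining oscillating sum by a termwise modulus estimate. Writing $\beta = \frac{2\pi}{|\ln q|} = \frac{2\pi}{\ln Q}$, each factor in the denominator has the form $1 - \frac{\beta k \mathbf{i}}{j}$ (since $\frac{2k\pi\mathbf{i}}{\ln q}$ is purely imaginary), so $\bigl|1 - \frac{\beta k \mathbf{i}}{j}\bigr|^2 = 1 + \frac{\beta^2 k^2}{j^2}$. Thus
$$r_n = \frac{p^a}{a\ln Q} \sum_{k \in \ZZ\setminus\{0\}} \frac{1}{\prod_{j=a}^{n}\bigl(1-\tfrac{\beta k \mathbf{i}}{j}\bigr)},$$
and the task reduces to bounding a real positive sum $\sum_{k\neq 0} \bigl(\prod_{j=a}^{n}\sqrt{1+\beta^2k^2/j^2}\bigr)^{-1}$.

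The key observation is that keeping only the first two factors in the product (the ones with $j = a$ and $j = a+1$) suffices and in fact produces exactly the constant claimed in the theorem. Indeed, discarding the other terms (each $\geq 1$) and then using $1+x \geq x$ for $x \geq 0$ gives
$$\prod_{j=a}^{n}\Bigl(1+\tfrac{\beta^2 k^2}{j^2}\Bigr) \;\geq\; \Bigl(1+\tfrac{\beta^2 k^2}{a^2}\Bigr)\Bigl(1+\tfrac{\beta^2 k^2}{(a+1)^2}\Bigr) \;\geq\; \frac{\beta^4 k^4}{a^2(a+1)^2} \;\geq\; \frac{\beta^4 k^4}{(a+1)^4}.$$
Taking square roots yields $\prod_{j=a}^{n}\bigl|1-\tfrac{\beta k\mathbf{i}}{j}\bigr|\geq \frac{\beta^2 k^2}{(a+1)^2}$ for every $k\neq 0$.

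Summing over $k \in \ZZ\setminus\{0\}$ and using $\sum_{k\geq 1} 1/k^2 = \pi^2/6$ gives
$$\left|\sum_{k\neq 0} \frac{1}{\prod_{j=a}^{n}(1-\beta k\mathbf{i}/j)}\right| \;\leq\; \frac{2(a+1)^2}{\beta^2}\cdot\frac{\pi^2}{6} \;=\; \frac{(a+1)^2\pi^2}{3\beta^2} \;=\; \frac{(a+1)^2 \ln^2 Q}{12},$$
after substituting $\beta^2 = 4\pi^2/\ln^2 Q$. Multiplying through by $\frac{p^a}{a\ln Q}$ yields the desired bound $|r_n|<\frac{(a+1)^2}{12a}p^a\ln Q$.

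I do not anticipate a serious obstacle. The only subtlety is recognizing that one should not try to exploit all $n-a+1$ factors of the product (which would give an $n$-dependent bound, defeating the purpose) but rather throw away most of them and retain just enough to match the $(a+1)^2$ scaling; the arithmetic $a^2(a+1)^2 \leq (a+1)^4$ then falls out trivially and the constant $1/12$ is fixed by $\zeta(2)$ combined with $\beta^2 = 4\pi^2/\ln^2 Q$.
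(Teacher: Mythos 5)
Your proposal is correct and follows essentially the same route as the paper: both start from Lemma \ref{lemma:probability}, lower-bound the modulus of the product by keeping only the factors $j=a$ and $j=a+1$, and then sum the resulting $\BigO{1/k^2}$ terms via $\zeta(2)$ to obtain the constant $\frac{(a+1)^2(\ln Q)^2}{12}$. The only cosmetic difference is how the two retained factors are handled (you use $(1+x)(1+y)\geq xy$ with $a^2(a+1)^2\leq(a+1)^4$, the paper uses $(1+x)(1+y)\geq(1+y)^2$ followed by $\frac{1}{1+y}\leq\frac{1}{y}$), which yields the identical termwise bound $\frac{(a+1)^2}{|\eta_k|^2}$.
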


\begin{proof}
Let $\eta_k = \frac{2\pi k \mathbf{i}}{\ln q}$, where $q=1-p$.
Notice that
\begin{gather*}
	\left|\prod_{j=a}^{n}(1-\frac{\eta_k}{j})\right|^2 =  
	\prod_{j=a}^{n}\left(1 + \frac{|\eta_k|^2}{j^2}\right) \geq 
	\prod_{j=a}^{a+1}\left(1 + \frac{|\eta_k|^2}{j^2}\right) \geq
	\left(1+\frac{|\eta_k|^2}{(a+1)^2}\right)^2 ~.
\end{gather*}
Therefore,
\begin{gather*}
\left| \sum_{k\in\ZZ\setminus\{0\}}\prod_{j=a}^{n}\frac{1}{1-\frac{\eta_k}{j}} \right| \leq 
2 \sum_{k=1}^{\infty} \frac{1}{1+\frac{|\eta_k|^2}{(a+1)^2} } \leq
2(a+1)^2 \sum_{k=1}^{\infty} \frac{1}{|\eta_k|^2} = \\
\frac{(a+1)^2(\ln q)^2}{2\pi^2} \sum_{k=1}^{\infty} \frac{1}{k^2} =
\frac{(a+1)^2(\ln q)^2}{12} ~,
\end{gather*}
so the conclusion follows from Lemma \ref{lemma:probability}.
\end{proof}
 
Let  us fix $p \in (0,1)$, let $Q =\frac{1}{1-p}$. We put
$$
\phi_p(a) = \frac{p^a}{a \ln{Q}} + \frac{(a+1)^2}{12 a} p^a \ln{Q}.
$$
Notice that $\Pr[W_{n,p}=a] \leq \phi_p(a)$.
\begin{theorem}
\label{theorem:winnersBound}
 $\Pr[W_n \geq k] < \frac{\phi(k)}{1-2p}$
\end{theorem}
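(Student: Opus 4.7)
The plan is to turn the tail probability into a geometric series dominated by $\phi_p(k)$. By Theorem \ref{thm:finalappro} we have $\Pr[W_{n,p} = a] \leq \phi_p(a)$, so
\[
\Pr[W_n \geq k] \;=\; \sum_{a=k}^n \Pr[W_{n,p} = a] \;\leq\; \sum_{a=k}^\infty \phi_p(a),
\]
and it suffices to bound this infinite tail sum (the $a=n$ boundary term, to which Theorem \ref{thm:finalappro} does not strictly apply, is handled by the direct computation $\Pr[W_{n,p}=n]=p^n/(1-q^n)$ and is of the same order as $\phi_p(n)$, so it is absorbed by the slack in the final bound).

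The key step is the termwise ratio inequality
\[
\phi_p(a+1) \;\leq\; 2p \cdot \phi_p(a), \qquad a \geq 1.
\]
Factoring $p^a$ out of $\phi_p(a) = p^a \bigl( \frac{1}{a \ln Q} + \frac{(a+1)^2 \ln Q}{12 a}\bigr)$ reduces this to verifying $g(a+1) \leq 2 g(a)$ for $g(a) = \frac{1}{a \ln Q} + \frac{(a+1)^2 \ln Q}{12 a}$. Both summands of $g$ admit separate comparisons: $\frac{1}{a+1} \leq \frac{2}{a}$ is immediate, and $\frac{(a+2)^2}{a+1} \leq \frac{2(a+1)^2}{a}$ rearranges to $0 \leq 2(a+1)^3 - a(a+2)^2 = a^3 + 2a^2 + 2a + 2$, which is manifestly positive.

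Iterating gives $\phi_p(k+j) \leq (2p)^j \phi_p(k)$ for all $j \geq 0$. Summing the geometric series (which converges precisely when $2p < 1$, consistent with the $\frac{1}{1-2p}$ factor in the statement and with the paper's standing assumption that $p$ is small) yields
\[
\sum_{a=k}^\infty \phi_p(a) \;\leq\; \phi_p(k) \sum_{j=0}^\infty (2p)^j \;=\; \frac{\phi_p(k)}{1-2p},
\]
as required. There is no serious obstacle here: once the ratio bound is in place, the rest is routine geometric-series bookkeeping, and the ratio bound itself boils down to the elementary cubic inequality above.
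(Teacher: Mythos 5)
Your proof follows the paper's argument exactly: bound each term by $\phi_p(a)$, extend the sum to infinity, and collapse it into a geometric series via the ratio bound $\phi_p(a+1)\le 2p\,\phi_p(a)$; your explicit verification of that ratio (the cubic inequality $a(a+2)^2\le 2(a+1)^3$) supplies the one step the paper merely asserts with ``it can be observed''. You are also right to flag that the $a=n$ term falls outside Theorem~\ref{thm:finalappro} --- indeed $\Pr[W_{n,p}=n]=p^n/(1-q^n)$ can slightly exceed $\phi_p(n)$ (e.g.\ $n=2$, $p=0.01$), so the termwise bound genuinely fails there and only the slack in the geometric tail saves the final inequality --- a point the paper's own proof passes over in silence, so your informal ``absorbed by the slack'' remark is, if anything, the more careful treatment.
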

\begin{proof}
It can be observed that $\frac{\phi_p(a+1)}{\phi_p(a)} < 2 p$.  
Therefore,
\begin{gather*}
 \Pr[W_n\geq k]=\sum_{a=k}^n\Pr[W_n=a]<\sum_{a=k}^\infty \phi(a)<\frac{\phi(k)}{1-2p} ~.
\end{gather*}
\end{proof}

\subsection{Discussion}

Let us observe that formulas from Theorem \ref{thm:finalappro} do not depend on the number $n$. 
However, small fluctuations (which are very interesting from theoretical point of view) are hidden inside the error term, which can be observed on the Fig. \ref{fig:onewinner}.  

\begin{figure}[ht]
\label{fig:onewinner}
\centering
\includegraphics[width=0.75\textwidth]{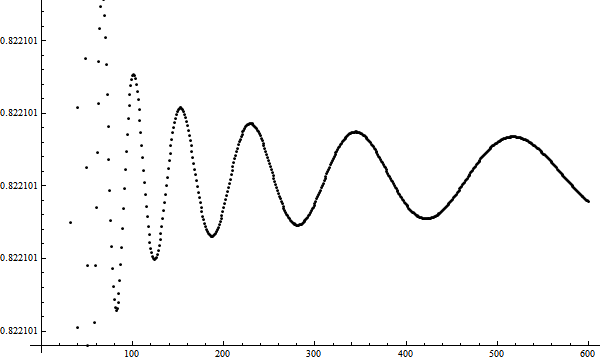}
\caption{Plot of $\Pr[W_{n,\frac13}=1]$ for $n=1,\ldots,600$.}
\end{figure}

This  practical independence of the number $n$ of nodes on the number of survivors is very interesting. 
However, the number $n$ has an influence on the required number of rounds in LGE.
This number may be controlled by Lemma 
\ref{lemma:boundNumberRounds}:  from this lemma we deduce that if 
$X \sim \MGeo{n}{p}$ then $\Pr[X > (\ln 10^{20} + \ln n)/\ln(Q)] < 10^{-20}$ (where $Q = 1/(1-p)$), 
and hence from a practical point of view it is negligible.
This implies that (see \cite{DBLP:conf/mascots/JacquetMM13} for details) the LGE algorithm should run
$2 \cdot \left\lceil \log_3 \left(\frac{1}{\ln(Q)}(\ln n+  \ln(10^{20})\right)\right\rceil$ rounds in order 
to ensure that its probabilistic properties are controlled by the distribution $\mathrm{WGeo}$ with probability at least
$1-10^{-20}$.

From Theorem \ref{thm:finalappro} we  deduce that $\Pr[W_{n,p}=1]= 1- \frac{p}{2} + \BigO{p^2}$
and $\Pr[W_{n,p}=2]= \frac{p}{2} + \BigO{p^2}$.
From these formulas we deduce that the probability of failure of one phase of LGE is quite large. However, notice
that from Theorem \ref{theorem:winnersBound} we get 
$\Pr[W_{n,0.01}>10] \approx 1.006 \cdot 10^{-19}$.
Therefore, the LGE algorithm may be used for quick reduction of potential leaders to a small subgroup. 
We see that if we use this algorithm with parameter $p = \frac{1}{100}$, then with
probability at least $1- 10^{-19}$, the number of survivors will be less or equal  $10$.
The survivors may then take part in another algorithm (e.g. in an algorithm based on paper 
\cite{DBLP:journals/dm/Prodinger93} 
or in algorithm based on paper \cite{DBLP:journals/combinatorics/JansonS97}, \cite{GLandHP2009}), 
which deals better with small sets of nodes,  in order to select a leader with high and controllable 
probability.

\section{Lower Bound}
\label{sec:lowerbound}

In the previous section we recalled that the LGE algorithm should use $\BigO{\ln\ln(n)}$ rounds in order to achieve high
effectiveness. In this section we prove a general result confirming that this bound is near to an optimal.
We use a method applied by D. E. Willard in \cite{Willard:1986} for an analysis of resolution protocols
in a multiple access channel.

Let us consider  a system $(U_i)_{i=1,\ldots,L}$ of $L$ urns and let us fix a number $n$.
We consider a process of throwing an arbitrary number $Q \in \{2,\ldots,n\}$ of balls into 
these urns. We assume that all balls are thrown independently and that the probability that
the ball is thrown into $i$th urn is equal $p_i$.
This process is fully described by the vector $\vec{p}$ of probabilities from the simplex
$\Sigma_L = \{(p_1,\ldots,p_L) \in [0,1]^L: p_1+\ldots+p_L = 1\}$ and the number $Q$ of balls.

The most broadly studied model of urns and balls is the uniform case, i.e. the case when 
$\vec{p} = (\frac1L,\ldots,\frac1L)$. However, in several papers 
(see e.g. \cite{Flajolet:1992}, \cite{boneh1997coupon}) one can find some results for the general case.
In this section we are interested in the existence of at least one singleton, i.e. in the existence of an urn $U_i$
with precisely one ball. The problem of estimation of the number of singletons was quite recently analyzed in 
\cite{EJP699}.

Let  $S_{\vec{p},Q}$ denote the event "there exists at least one urn with a single ball"
and let $S_{\vec{p},Q,i}$ denote the event "there is exactly one ball in $i$th urn". Then, 
$\Pr[S_{\vec{p},Q,i}] = Q p_i (1-p_i)^{Q-1}$ and $S_{\vec{p},Q} = \bigcup_{i=1}^{L} S_{\vec{p},Q,i}$,
therefore,
$\Pr[S_{\vec{p},Q}] \leq Q \sum_{i=1}^{L} p_i (1-p_i)^{Q-1}$.

Let us assume that the number $Q$ of balls is unknown but it is bounded by a number $n$.
We are going to show that if the number $n$ is sufficiently large compared to $L$, then
there is no $\vec{p} \in \Sigma_L$ which will guarantee the existence of singleton with a high probability 
for arbitrary $Q$ from $\{2,\ldots,n\}$.
More precisely, let
$$
  \MSP{L}{n} = \max_{\vec{p}\in\Sigma_L} \min_{2\leq Q \leq n} \Pr[S_{\vec{p},Q}] ~.
$$
(term MSP is an abbreviation of "Maximal Success Probability").
 
\begin{theorem}
\label{thm:main}
For arbitrary $L\geq 1$ and $n \geq 2$, we have
$$\MSP{L}{n} < \frac{L-1}{\HarmonicN{n} -1}~.$$
\end{theorem}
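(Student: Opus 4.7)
The plan is to bound $\Pr[S_{\vec p,Q}]$ via the union bound already displayed just before the theorem statement, namely
$$\Pr[S_{\vec p,Q}] \le Q \sum_{i=1}^{L} p_i (1-p_i)^{Q-1},$$
and then, instead of trying to find the worst $Q$ directly, exploit the standard ``min is dominated by a weighted average'' trick with the clever choice of weights $w_Q = 1/Q$ for $Q \in \{2,\dots,n\}$. With these weights the factor $Q$ in the union bound cancels, which is precisely why the harmonic number $\HarmonicN{n}-1 = \sum_{Q=2}^{n} 1/Q$ appears in the denominator of the claimed estimate.

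Concretely, I would first write
$$\min_{2\le Q\le n} \Pr[S_{\vec p,Q}] \;\le\; \frac{\sum_{Q=2}^{n} \frac{1}{Q}\cdot Q\sum_{i=1}^{L} p_i(1-p_i)^{Q-1}}{\sum_{Q=2}^{n} \frac{1}{Q}} \;=\; \frac{1}{\HarmonicN{n}-1}\sum_{i=1}^{L} p_i \sum_{Q=2}^{n} (1-p_i)^{Q-1},$$
swapping the order of summation in the numerator. Then I would evaluate the inner geometric sum: for each $i$ with $p_i>0$,
$$p_i \sum_{Q=2}^{n} (1-p_i)^{Q-1} \;=\; (1-p_i) - (1-p_i)^{n} \;<\; 1-p_i,$$
with the trivial bound $0 \le 1-p_i$ when $p_i=0$. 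Summing over $i$ and using $\sum_i p_i = 1$ gives $\sum_i p_i \sum_{Q=2}^{n}(1-p_i)^{Q-1} \le L-1$, which combined with the previous display yields the desired inequality.

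For the strictness claimed in the theorem (assuming $L\ge 2$ so the right-hand side is positive), I would argue separately in the two regimes: if every $p_i\in(0,1)$, the bound in the displayed inequality above is strict term by term; if some $p_i=1$, then all other probabilities vanish and $\Pr[S_{\vec p,Q}]=0$ for $Q\ge 2$, so the minimum is $0$, still strictly below $\frac{L-1}{\HarmonicN{n}-1}>0$. The main obstacle is not computational — the geometric sum and union bound are elementary — but conceptual: one must recognize that weighting the trivial chain $\min_Q a_Q \le \frac{\sum_Q w_Q a_Q}{\sum_Q w_Q}$ with weights $1/Q$ is the right way to turn the adversarial choice of $Q$ into a harmonic sum, thereby revealing the logarithmic lower bound on the number of slots $L$ that a protocol must use to succeed uniformly over $Q\le n$.
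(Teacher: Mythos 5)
Your proposal is correct and follows essentially the same route as the paper: the weighted-average bound with weights $1/Q$ is exactly the paper's function $f(\vec{p})=\sum_{Q=2}^{n}\Pr[S_{\vec{p},Q}]/Q$, bounded below by $p^*(\HarmonicN{n}-1)$ and above by $L-1$ via the union bound and the geometric sum. Your treatment of strictness (keeping the $(1-p_i)^n$ term and handling $p_i\in\{0,1\}$ separately) is in fact slightly more careful than the paper's, but the argument is the same.
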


\begin{proof}
Let us observe that if $\vec{p}\in\Sigma_L$ is such that for some $i$ we have $p_i = 1$ and $Q\geq 2$, then
$\Pr[S_{\vec{p},Q}]=0$, so  $\min_{2\leq Q \leq n} \Pr[S_{\vec{p},Q}] = 0$. Hence, we may consider only
such $\vec{p}\in\Sigma_L$ that $p_i<1$ for each $i=1,\ldots,L$.
 
Let us fix  the number $L$ of urns and let us consider the following function (this is the trick which we borrow from
\cite{Willard:1986}):
$$
  f(\vec{p}) = \sum_{Q=2}^{n} \frac{\Pr[S_{\vec{p},Q}]}{Q}~. 
$$
Then we have
\begin{gather*}
	f(\vec{p}) \leq \sum_{Q=2}^{n} \sum_{i=1}^{L}\frac{\Pr[S_{\vec{p},Q,i}]}{Q} =  
                \sum_{Q=2}^{n} \sum_{i=1}^{L}p_i (1-p_i)^{Q-1} \leq 
								\sum_{i=1}^{L} \sum_{Q=2}^{\infty} p_i (1-p_i)^{Q-1} = \\
	\sum_{i=1}^{L} p_i (1-p_i) \frac{1}{1-(1-p_i)} = \sum_{i=1}^{L} (1-p_i) = 
	L - \sum_{i=1}^{L} p_i = L-1~. 	
\end{gather*}
On the other side, let $p^* = \min\{\Pr[S_{\vec{p},Q}]: 2\leq Q \leq n\}$. Then we have

\begin{gather*}
	f(\vec{p}) \geq \sum_{Q=2}^{n} \frac{p^*}{Q} = 
                  p^* \sum_{Q=2}^{n} \frac{1}{Q}  = p^* (\HarmonicN{n} - 1)~. 	
\end{gather*}
Therefore, we have
$$
  p^* (\HarmonicN{n} - 1) \leq f(\vec{p}) < L-1~.
$$
Hence, if we take $Q^*$ such that $\Pr[S_{\vec{p},Q^*}] = p*$, then
$$
  \Pr[S_{\vec{p},Q^*}] <\frac{L-1}{\HarmonicN{n}-1}~,
$$
so 
$$
  \min_{2\leq Q \leq n} \Pr[S_{\vec{p},Q}] < \frac{L-1}{\HarmonicN{n}-1}
$$
for arbitrary $\vec{p} \in \Sigma_n$.  
\end{proof}

\begin{corollary}
\label{cor:bound01}
If $1 \leq L\leq  \frac12 \ln n + \frac{1+\gamma}{2}$ then   $\MSP{L}{n} < \frac12$.
\end{corollary}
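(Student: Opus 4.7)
The plan is to bootstrap directly from Theorem \ref{thm:main}, which already gives the bound $\MSP{L}{n} < (L-1)/(\HarmonicN{n}-1)$, and then to show that under the hypothesis on $L$ this fraction is at most $1/2$. So the whole corollary is really a bit of elementary rearrangement of inequalities.

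First I would observe that $(L-1)/(\HarmonicN{n}-1) \leq 1/2$ is equivalent to $2L - 1 \leq \HarmonicN{n}$. Then, multiplying the hypothesis $L \leq \tfrac12\ln n + \tfrac{1+\gamma}{2}$ by $2$ and subtracting $1$ yields $2L - 1 \leq \ln n + \gamma$. So it suffices to verify the classical inequality $\ln n + \gamma \leq \HarmonicN{n}$, which holds for every $n\geq 1$ (one can cite the standard expansion $\HarmonicN{n} = \ln n + \gamma + \tfrac{1}{2n} - \tfrac{1}{12n^2} + \cdots$, or equivalently note that the sequence $\HarmonicN{n} - \ln n$ decreases to $\gamma$ from above). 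Chaining these:
\[
  2L - 1 \;\leq\; \ln n + \gamma \;\leq\; \HarmonicN{n},
\]
which is exactly the equivalent form of $(L-1)/(\HarmonicN{n}-1) \leq 1/2$.

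Combining this with the strict inequality from Theorem \ref{thm:main} then gives $\MSP{L}{n} < (L-1)/(\HarmonicN{n}-1) \leq 1/2$, which is the claim. There is no real obstacle here; the only thing to be careful about is preserving strictness versus non-strictness (the strict sign comes from Theorem \ref{thm:main}, not from the harmonic-number estimate), and noting that the edge case $L=1$ is trivial since then $(L-1)/(\HarmonicN{n}-1) = 0$ and even the whole upper bound vanishes.
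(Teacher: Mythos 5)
Your proposal is correct and matches the paper's argument, which likewise derives the corollary directly from Theorem \ref{thm:main} together with the inequality $\HarmonicN{n} \geq \ln n + \gamma$; you have simply written out the elementary rearrangement that the paper leaves implicit.
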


\begin{corollary}
\label{cor:bound02}
If $n \geq \exp(2 L - (1+\gamma))$  then   $\MSP{L}{n} < \frac12$.
\end{corollary}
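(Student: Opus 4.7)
The plan is to derive this corollary directly from Theorem \ref{thm:main}, which supplies the bound $\MSP{L}{n} < \frac{L-1}{\HarmonicN{n}-1}$. It therefore suffices to show that under the hypothesis $n \geq \exp(2L-(1+\gamma))$, the right-hand side is at most $\frac12$; equivalently, that $\HarmonicN{n} \geq 2L - 1$.

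First I would invoke the well-known lower bound $\HarmonicN{n} \geq \ln n + \gamma$, which follows from the fact that the sequence $\HarmonicN{n} - \ln n$ decreases monotonically to the Euler constant $\gamma$ (a standard integral-comparison argument). Combined with the hypothesis $\ln n \geq 2L - (1+\gamma)$, this gives
$$
\HarmonicN{n} \geq \ln n + \gamma \geq 2L - 1~,
$$
so $\HarmonicN{n} - 1 \geq 2(L-1)$, and hence $\frac{L-1}{\HarmonicN{n}-1} \leq \frac12$. Plugging into Theorem \ref{thm:main} yields $\MSP{L}{n} < \frac12$ as required.

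There is really no obstacle here: the statement is a one-line algebraic rearrangement of Theorem \ref{thm:main} together with the classical asymptotic estimate of $\HarmonicN{n}$. The only minor things to check are (i) the direction of the inequality $\HarmonicN{n} \geq \ln n + \gamma$ (which is the correct direction, since $\HarmonicN{n} - \ln n$ approaches $\gamma$ from above), and (ii) the trivial edge case $L=1$, for which $L-1=0$ and the conclusion $\MSP{L}{n} < \frac12$ is immediate from Theorem \ref{thm:main}.
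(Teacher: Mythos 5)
Your proof is correct and follows exactly the route the paper takes: the paper's own (one-line) proof likewise combines Theorem \ref{thm:main} with the inequality $\HarmonicN{n} \geq \ln(n) + \gamma$. Your version merely spells out the intermediate algebra ($\HarmonicN{n} - 1 \geq 2(L-1)$) and the $L=1$ edge case, which the paper leaves implicit.
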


\begin{proof}
Both proofs follow directly from Theorem \ref{thm:main} and the inequality $\HarmonicN{n} \geq \ln(n) + \gamma$. 
\end{proof}

\subsection{Application to Leader Election Problem}

Let us consider any oblivious leader election algorithm in which at the beginning each station selects randomly and independently
a sequence of bits of length $M$, and later this station use the sequence 
in the algorithm in a deterministic way. 
Let $n$ denote the upper bound on the number of stations taking part in this algorithm 
and let $b_i$ denote the sequence of bits chosen by the $i$th station.
Observe that if for each $i$ there is $j\neq i$ such that $b_i = b_j$, then the algorithm must fail. 
Hence, success is possible only if there is a singleton  in choices made from the space $\{0,1\}^M$ 
of all possible sequences of bits. When we use Corollary \ref{cor:bound01} with $L = 2^M$, then we deduce
that if $M \leq \log_2\left(\frac12 \ln n + \frac{1+\gamma}{2}\right)$ then the probability that the considered algorithm
chooses a leader is less than  $\frac12$. We may say that $\log_2(\frac12 \ln n)$ random bits are too few for distinguishing 
an arbitrary collection of $\leq n$ objects with a high probability.  

\section*{Acknowledgment}
The authors wish to thank the anonymous reviewers for their valuable comments and for drawing attention to that a large part of the results from Section 2 of this paper has already been proven in \cite{kirschenhofer1996} and that the are new results about the asymmetric leader election algorithm (e.g. \cite{GLandHP2009}) not mentioned in a previous version of our paper.
 
\bibliographystyle{abbrvnat}
\bibliography{Green}
\label{lastpage}
\end{document}